\newtheorem{thm}{Theorem}
\newtheorem{rem}{Remark}
\newtheorem{assumption}{Assumption}
\newtheorem{proof}{Proof}
\title{Discrete Adaptive Control Allocation}
\author{
  S. S.~Tohidi \\
  Mechasnical Engineering Department\\
  Bilkent University\\
  Ankara, 06800 Turkey \\
  \texttt{shahabaldin@bilkent.edu.tr} \\
   \And
 Y.~Yildiz \\
  Mechasnical Engineering Department\\
  Bilkent University\\
  Ankara, 06800 Turkey \\
  \texttt{yyildiz@bilkent.edu} \\
}
\begin{document}
\maketitle

\begin{abstract}
The main purpose of a control allocator is to distribute a total control effort among redundant actuators. This paper proposes a discrete adaptive control allocator for over-actuated sampled-data systems in the presence of actuator uncertainty. The proposed method does not require uncertainty estimation or persistency of excitation. Furthermore, the presented algorithm employs a closed loop reference model, which provides fast convergence without introducing excessive oscillations. To generate the total control signal, an LQR controller with reference tracking is used to guarantee the outer loop asymptotic stability. The discretized version of the Aerodata Model in Research Environment (ADMIRE) is used as an over-actuated system, to demonstrate the efficacy of the proposed method.
\end{abstract}

\keywords{First keyword \and Second keyword \and More}

\section{Introduction}
Increasing the number of actuators in dynamical systems is one of the ways to improve maneuverability and fault tolerance \cite{TohYil20a, Toh16}. Thanks to the advances in microprocessors and progress in actuator miniaturization, which leads to actuator cost reduction, over-actuated systems are becoming ubiquitous in engineering applications. Aerial vehicles \cite{Duc09, SadChaZhaThe12, She15, She17, AcoYil14, Yil11a, TohYil18}, marine 
vehicles \cite{Che13, Joh08}, and automobiles \cite{TjoJoh10, Tem18, Tem20, Toh13} can be counted as examples of systems where redundant actuators are employed.

Allocating control signals among redundant actuators can be achieved via several different control allocation methods which can be categorized into the following categories: Pseudo-inverse-based, optimization-based and dynamic control allocation. Pseudo-inverse-based methods \cite{Alw08, Toh16}, which have the lowest computational complexity among the others, are implemented by manipulating the null space of the control input matrix. Optimization-based methods \cite{Bod02, Har05, YilKol10, YilKolAco11, Yang19} are performed by minimizing a cost function that penalizes the difference between the desired and achieved total control inputs. Dynamic control allocation methods \cite{FalHol16, Gal18, TohYil20a}, on the other hand, are based on solving differential equations that model the control allocation goals. These methods can be extended to consider actuator limits \cite{TohYil17, TohYil20b}. A survey of control allocation methods can be found in \cite{JohFos13}.

Actuator effectiveness uncertainty is an inevitable problem in several dynamical systems and it can be handled by control allocation methods in over-actuated systems. However, most of the control allocation methods that are proposed to solve this problem require uncertainty estimation and persistently exciting input signals \cite{CasGar10, Toh16}. Adaptive control allocation \cite{TjoJoh08, TohYil20a, TohYil16}, on the other hand, is able to manage redundancy and uncertainty of actuators without these requirements.

The majority of control algorithms are implemented using digital technology. Therefore, before their application, continuous-time controllers need an intermediate discretization step, which may lead to loss of stability margins \cite{San10, Dog20}. 

In this paper, a discrete adaptive control allocation method is introduced for sampled-data systems. The approach is inspired by the continuous control allocation algorithm that is recently proposed in \cite{TohYil20a}.  It does not require uncertainty estimation or persistency of excitation assumption. Furthermore, the method is implemented using a closed loop reference model \cite{GibAnnLav15}, which proved to speed up the system response without causing excessive oscillations \cite{Alan18}. To the best of our knowledge, a discrete adaptive control allocator with these properties is not available in the prior literature.

This paper is organized as follows: Section \ref{sec:2} introduces the notations and definitions employed during the paper. Section \ref{sec:3} presents
control allocation problem statement. The discrete adaptive control
allocation is presented in Section \ref{sec:4}. Controller design is presented in section
\ref{sec:5}. Section \ref{sec:6} illustrates the effectiveness of the proposed
method in the simulation environment. Finally, Section \ref{sec:7}
summarizes the paper.

\section{NOTATIONS}\label{sec:2}
In this section, we collect several definitions and basic results which are used in the following sections. Throughout this paper, $\lambda_{\text{min}}(.)$, $\lambda_{\text{max}}(.)$ and $\lambda_{i}(.)$ refer to the minimum, maximum and the $ i^{\text{th}} $ eigenvalue of a matrix, respectively. $ I_r $ is the identity matrix of dimension $ r\times r $, $ 0_{r\times m} $ is the zero matrix of dimension $ r\times m $. $ \text{tr}(.) $ refers to the trace operation and $ \text{diag}([.]) $ symbolizes a diagonal matrix with the elements of a vector $ [.] $. $ \mathbb{R} $, $ \mathbb{R}^n $ and $ \mathbb{R}^{n\times m} $ denote the set of real numbers, real column vectors with $ n $ elements, and $ n\times m $ real matrices, respectively. $ \mathbb{Z}^+ $ denotes the set of non-negative integers. In the discrete time, the $ \mathcal{L}_2 $ and $ \mathcal{L}_{\infty} $ signal norms are defined as
\begin{align}
||x(k)||=||x(k)||_2&=\sqrt{\sum_{k=0}^{\infty}\left(x_1^2(k)+ ... + x_n^2(k) \right) }, \\
||x(k)||_{\infty}&=\sup_{k\in \mathbb{Z}^+}\max_{1\leq i\leq n}|x_i(k)|,
\end{align}
where $ x_1, ..., x_n $ are the elements of $ x $. $ x(k)\in L_2 $ if $ ||x(k)||_2<\infty $. Also, $ x(k)\in L_{\infty} $ if $ ||x(k)||_{\infty}<\infty $.

\section{PROBLEM STATEMENT}\label{sec:3}


Consider the following discretized plant dynamics
\begin{align}\label{eq:1}
{x}(k+1)&=Ax(k)+B_uu(k)\notag\\
&=Ax(k)+B_vBu(k),
\end{align}
where $ k\in \mathbb{Z}^+ $ is the sampling instant, $x\in \mathbb{R}^{n}$ is the system states vector, $u\in \mathbb{R}^{m}$ is the control input vector, $A \in \mathbb{R}^{n\times n}$ is the known state matrix and $B_u \in \mathbb{R}^{n\times m}$ is the known control input matrix. 
\begin{rem}\label{rem0}
	It is noted that in the development of the control allocator, the matrix $ A $ being known or unknown is irrelevant. This matrix will play a role in the controller (not control allocator) design and since the contribution of the paper is a novel discrete time adaptive control allocator, the matrix $ A $ is taken to be known to facilitate the controller (not control allocator) development. 
\end{rem}

Redundancy of actuators leads $ B_u $ in (\ref{eq:1}) to be rank deficient, that is, $ \text{rank}(B_u)=r< m $.  Therefore, $ B_u $ can be decomposed into the known matrices $B_v \in \mathbb{R}^{n\times r}$ and $B \in \mathbb{R}^{r \times m}$ with $ \text{rank}(B_v)=\text{rank}(B)=r $. To model the actuator degradation, a diagonal matrix $\Lambda \in \mathbb{R}^{m\times m}$ with uncertain positive elements, belong to $ (0, 1] $, is introduced to the system dynamics as
\begin{align}\label{eq:2}
{x}(k+1)&=Ax(k)+B_vB\Lambda u(k)\notag \\
&=Ax(k)+B_vv(k),
\end{align}
where $v\in \mathbb{R}^{r}$ denotes the bounded control input produced by the controller. The boundedness of the control input can be guaranteed by using a soft saturation on the control signal $ v $, before feeding it to the control allocator. An example of this can also be seen in \cite{TohYil20a}.

The control allocation problem is to achieve
\begin{equation}\label{eq:3}
B\Lambda u(k)=v(k),
\end{equation} 
without using any matrix identification methods. Since $\Lambda$ is unknown, conventional control allocation methods do not apply.


\section{DISCRETE ADAPTIVE CONTROL ALLOCATION}\label{sec:4}

Consider the following dynamics
\begin{equation}\label{eq:4}
{\xi}(k+1)=A_m\xi(k)+B\Lambda u(k)-v(k),
\end{equation}
where $A_m \in \mathbb{R}^{r\times r}$  is stable matrix, that is, eigenvalues of $ A_m $ are inside the unit circle. A reference model dynamics is chosen as
\begin{equation}\label{eq:5}
{\xi}_m(k+1)=A_m\xi_m(k).
\end{equation}
Defining the control input as a mapping from $v$ to $u$,
\begin{equation}\label{eq:6}
u(k)={\theta}_v^T(k)v(k),
\end{equation}
where $\theta_v \in \mathbb{R}^{r\times m}$ represents the adaptive parameter matrix to be determined, and substituting (\ref{eq:6}) into (\ref{eq:4}), it is obtained that
\begin{equation}\label{eq:7}
{\xi}(k+1)=A_m\xi(k)+(B\Lambda{\theta}_v^T(k)-I_r)v(k).
\end{equation}

\hfill \break
Defining $\theta_v(k)=\theta_v^*+\tilde{\theta}_v(k)$, where ${\theta_v^*}=((B\Lambda)^T(B\Lambda \Lambda B^T)^{-1})^T$ is the ideal value of $\theta_v$, which corresponds to the pseudo inverse of $B\Lambda$, and $\tilde{\theta}_v$ is the deviation of $\theta_v$ from its ideal value, equation (\ref{eq:7}) can be rewritten as
\begin{equation}\label{eq:8}
\xi(k+1)=A_m\xi(k)+B\Lambda \tilde{\theta}_v^T(k)v(k).
\end{equation}
Defining the error $e(k)=\xi(k)-\xi_m(k)$, and using (\ref{eq:5}) and (\ref{eq:8}), the error dynamics is obtained as
\begin{equation}\label{eq:9}
{e}(k+1)=A_me(k)+B\Lambda \tilde{\theta}_v^T(k)v(k).
\end{equation}
\begin{assumption}\label{Assum1}
	The design matrix $ A_m $ is chosen such that \cite{Ioa06}:
	
	\noindent
	(i) $ |\lambda_i(A_m)|\leq 1, i=1, ..., r $,
	
	\noindent
	(ii) All controllable modes of $ (A_m, B\Lambda) $ are inside the unit circle,
	
	\noindent
	(iii) The eigenvalues of $ A_m $ on the unit circle have a Jordan block of size one.
\end{assumption}
\begin{thm}\label{thm2}
	Consider the system $ x(k+1)=\hat{A}x(k)+\hat{B}u(k) $, which satisfies Assumption \ref{Assum1}. There exist positive constants $ m_1 $ and $ m_2 $, independent of $ k $ and $ N $, such that
	\begin{align}\label{eq:thm2}
	||x(k)||\leq m_1+m_2\max_{0\leq \tau \leq N}||u(\tau)||,
	\end{align}
	for all $ k $, such that $ 0\leq k\leq N $.
\end{thm}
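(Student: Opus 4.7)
The plan is to establish the bound via the Kalman controllable decomposition, separating modes that respond to $u$ from those that merely evolve from the initial state. The identity $x(k+1)=\hat A x(k)+\hat B u(k)$ admits the explicit variation-of-parameters formula
\begin{equation*}
x(k)=\hat A^{k}x(0)+\sum_{j=0}^{k-1}\hat A^{k-1-j}\hat B u(j),
\end{equation*}
but applying norms directly to this would produce a sum with $k$ terms: even though $\|\hat A^{k}\|$ can be shown to be uniformly bounded using parts (i) and (iii) of Assumption~\ref{Assum1}, the convolution $\sum\|\hat A^{k-1-j}\|$ need not be summable, so the bound would depend on $N$. The decomposition below is how I would avoid this.

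First, I would apply a similarity transformation $T$ that puts $(\hat A,\hat B)$ in controllable canonical form
\begin{equation*}
T^{-1}\hat A T=\begin{bmatrix}A_c & A_{12}\\ 0 & A_{\bar c}\end{bmatrix},\qquad T^{-1}\hat B=\begin{bmatrix}B_c\\ 0\end{bmatrix},
\end{equation*}
with $(A_c,B_c)$ controllable, and write $T^{-1}x(k)=[z_c(k)^{T},\,z_{\bar c}(k)^{T}]^{T}$. The eigenvalues of $A_c$ are precisely the controllable modes of $(\hat A,\hat B)$, which by Assumption~\ref{Assum1}(ii) lie strictly inside the unit disc; hence $A_c$ is Schur and there exist $c_1>0$, $\rho\in(0,1)$ with $\|A_c^{j}\|\le c_1\rho^{j}$. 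The eigenvalues of $A_{\bar c}$ are the remaining eigenvalues of $\hat A$, which by (i) satisfy $|\lambda|\le 1$ and by (iii) have Jordan blocks of size one whenever $|\lambda|=1$; therefore $\|A_{\bar c}^{j}\|\le c_2$ uniformly in $j$.

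Next I would bound the two components separately. The uncontrollable block is input-free, so $z_{\bar c}(k)=A_{\bar c}^{k}z_{\bar c}(0)$ gives the pointwise bound $\|z_{\bar c}(k)\|\le c_2\|z_{\bar c}(0)\|$ for every $k$. The controllable block satisfies
\begin{equation*}
z_c(k)=A_c^{k}z_c(0)+\sum_{j=0}^{k-1}A_c^{k-1-j}\bigl(A_{12}z_{\bar c}(j)+B_c u(j)\bigr),
\end{equation*}
so using $\|A_c^{j}\|\le c_1\rho^{j}$ together with $\sum_{j=0}^{k-1}\rho^{k-1-j}\le \frac{1}{1-\rho}$ yields
\begin{equation*}
\|z_c(k)\|\le c_1\|z_c(0)\|+\frac{c_1}{1-\rho}\bigl(\|A_{12}\|\,c_2\|z_{\bar c}(0)\|+\|B_c\|\max_{0\le \tau\le N}\|u(\tau)\|\bigr).
\end{equation*}
Note that every multiplicative constant on the right is independent of $k$ and $N$; this is the critical point, made possible by the strict contraction of $A_c$.

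Finally I would reassemble $x(k)=T\,[z_c(k)^{T},z_{\bar c}(k)^{T}]^{T}$, absorb $\|T\|$, $\|T^{-1}\|$, and the $\|x(0)\|$-dependent terms into a single constant $m_1>0$, and collect the coefficient of $\max_{0\le\tau\le N}\|u(\tau)\|$ into $m_2>0$, giving the desired inequality \eqref{eq:thm2} uniformly on $0\le k\le N$. The step I expect to require the most care is verifying that $\|A_{\bar c}^{j}\|$ is indeed uniformly bounded: this uses Assumption~\ref{Assum1}(iii) in an essential way, because a Jordan block of size two or larger at an eigenvalue of modulus one would produce a factor growing polynomially in $j$, which would destroy the $N$-independence of $m_1$.
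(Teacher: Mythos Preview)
Your argument is correct and carefully assembled: the Kalman controllable decomposition is exactly the right tool here, the Schur property of $A_c$ from Assumption~\ref{Assum1}(ii) gives the summable convolution you need, and your observation that unit-circle eigenvalues of $\hat A$ cannot appear in $A_c$ (so their Jordan structure is inherited verbatim by $A_{\bar c}$) is the key step that makes the uniform bound on $\|A_{\bar c}^{j}\|$ go through.

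The paper does not actually prove this result; it simply cites \cite{Ioa06}. Your decomposition argument is the standard route to this bounded-input/bounded-state lemma for marginally stable discrete systems and is almost certainly what the cited reference contains, so in effect you have supplied the proof the paper omits.
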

\begin{proof}
	\normalfont The proof can be found in \cite{Ioa06}.	$\blacksquare$
\end{proof}
\begin{thm}\label{thm3}
	Consider the error dynamics (\ref{eq:9}), which satisfies Assumption \ref{Assum1}. If the update law
	\begin{align}\label{eq:thm3}
	{\theta}_v(k+1)={\theta}_v(k)+\Gamma v(k)\epsilon^T(k)B
	\end{align}
	is used, where $ 0<\Gamma=\Gamma^T\in \mathbb{R}^{r\times r} $ is the adaptation rate matrix, and $ \epsilon(k)\in \mathbb{R}^r $ is defined as
	\begin{align}\label{eq:thm3x}
	\epsilon(k)=\frac{v(k)-B\Lambda u(k)}{\sigma^2(k)},
	\end{align}
	with $ \sigma(k)\equiv\sqrt{1+\lambda_{\text{max}}(B\Lambda B^T)v^T(k)\Gamma v(k)} $, then the adaptive parameter $ \theta_v(k) $, the error signal $ e(k) $ and all signals remain bounded. Furthermore, $ \lim_{k\rightarrow \infty}e(k)=0 $.
\end{thm}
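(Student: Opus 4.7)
The plan is to run a trace-Lyapunov argument in the parameter error, which is the standard template for discrete-time normalized adaptive laws. The preparatory move is to rewrite $\epsilon(k)$ as an affine function of the parameter error $\tilde{\theta}_v(k)$. Since $u(k)=\theta_v^T(k)v(k)$ and $\theta_v^*$ is the weighted right pseudo-inverse of $B\Lambda$, one checks directly that $B\Lambda\theta_v^{*T}=I_r$, so that $v(k)-B\Lambda u(k)=-B\Lambda\tilde{\theta}_v^T(k)v(k)$ and therefore $\epsilon(k)=-B\Lambda\tilde{\theta}_v^T(k)v(k)/\sigma^2(k)$. In particular the forcing term in the error dynamics (\ref{eq:9}) is exactly $-\sigma^2(k)\epsilon(k)$, which couples the excitation of $e$ to the quantity whose $\ell_2$-norm I will control.

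I would then take the Lyapunov candidate $V(k)=\text{tr}\bigl(\tilde{\theta}_v(k)\Lambda\tilde{\theta}_v^T(k)\Gamma^{-1}\bigr)$, which is nonnegative because $\Lambda$ and $\Gamma^{-1}$ are symmetric positive-definite (it is a squared weighted Frobenius norm of $\tilde{\theta}_v$). Plugging the update law (\ref{eq:thm3}) into $V(k+1)$ and expanding via the cyclic property of the trace, the two linear cross-terms should combine to $-2\sigma^2(k)\|\epsilon(k)\|^2$ once $B\Lambda\tilde{\theta}_v^T(k)v(k)=-\sigma^2(k)\epsilon(k)$ is substituted, while the quadratic term should reduce to $\bigl(v^T(k)\Gamma v(k)\bigr)\bigl(\epsilon^T(k)B\Lambda B^T\epsilon(k)\bigr)$. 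Upper-bounding the last factor by $\lambda_{\max}(B\Lambda B^T)\|\epsilon(k)\|^2$ and recognizing $\lambda_{\max}(B\Lambda B^T)\,v^T(k)\Gamma v(k)=\sigma^2(k)-1$ should yield the clean estimate $\Delta V(k)\le -(1+\sigma^2(k))\|\epsilon(k)\|^2\le -\|\epsilon(k)\|^2$. The normalization $\sigma$ is engineered precisely so that this absorption goes through; pulling off the algebra without a sign slip, especially the cancellation supplied by the pseudo-inverse identity $B\Lambda\theta_v^{*T}=I_r$, is the main technical hurdle.

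From $V\ge 0$ and $\Delta V\le 0$ one concludes that $V$, and hence $\tilde{\theta}_v$ and $\theta_v$, are bounded, while telescoping gives $\sum_{k=0}^{\infty}\|\epsilon(k)\|^2\le V(0)<\infty$, so $\epsilon\in\ell_2$ and $\epsilon(k)\to 0$. Boundedness of $v$ from the soft saturation upstream of the allocator makes $\sigma$ bounded, and hence the forcing $w(k):=B\Lambda\tilde{\theta}_v^T(k)v(k)=-\sigma^2(k)\epsilon(k)$ in (\ref{eq:9}) is bounded and tends to zero. Applying Theorem~\ref{thm2} to the error dynamics with input $w$ gives $\|e(k)\|\le m_1+m_2\max_\tau\|w(\tau)\|$, so $e$ is bounded, and consequently $\xi=e+\xi_m$ and $u=\theta_v^T v$ are bounded. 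Finally, the variation-of-constants formula $e(k)=A_m^k e(0)+\sum_{j=0}^{k-1}A_m^{k-1-j}w(j)$, combined with the exponential decay of $A_m^k$ (since $A_m$ is Schur) and $w(k)\to 0$, yields $e(k)\to 0$, completing the argument.
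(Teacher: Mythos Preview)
Your proposal is correct and mirrors the paper's proof: the same Lyapunov function $V=\text{tr}\bigl(\tilde{\theta}_v^T\Gamma^{-1}\tilde{\theta}_v\Lambda\bigr)$ (equal to yours by cyclic invariance of the trace), the same rewriting $\epsilon=-B\Lambda\tilde{\theta}_v^Tv/\sigma^2$, the same $\Delta V$ estimate (you obtain the marginally sharper $-(\sigma^2+1)\|\epsilon\|^2$, the paper settles for $-\sigma^2\|\epsilon\|^2$), telescoping for the $\ell_2$ property, and Theorem~\ref{thm2} for boundedness of $e$. Your variation-of-constants step for $e(k)\to 0$ is more explicit than the paper's, which simply invokes (\ref{eq:18x}) together with the error dynamics (\ref{eq:9}).
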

\begin{proof}
\normalfont 	Consider the scalar positive definite function
	\begin{equation}\label{eq:10}
	V(k)=\text{tr}\left\lbrace \tilde{\theta}_v^T(k)\Gamma^{-1}\tilde{\theta}_v(k)\Lambda\right\rbrace ,
	\end{equation}
	where $ \Gamma=\Gamma^T>0 $. The time increment of (\ref{eq:10}) can be calculated as
	\begin{align}\label{eq:11}
	V(k+1)-V(k)&=\text{tr}\left\lbrace \tilde{\theta}_v^T(k+1)\Gamma^{-1}\tilde{\theta}_v(k+1)\Lambda \right\rbrace -\text{tr} \left\lbrace  \tilde{\theta}_v^T(k)\Gamma^{-1}\tilde{\theta}_v(k)\Lambda \right\rbrace.
	\end{align}
	Using (\ref{eq:thm3}) and the fact that $\theta_v(k)=\theta_v^*+\tilde{\theta}_v(k)$, it is obtained that
	\begin{align}\label{eq:11x}
	\tilde{\theta}_v(k+1)&=\theta_v(k+1)-\theta_v^*\notag \\ 
	&=\theta_v(k)+\Gamma v(k)\epsilon^T(k)B-\theta_v^*\notag \\
	&=\tilde{\theta}_v(k)+\Gamma v(k)\epsilon^T(k)B.
	\end{align}
	Substituting (\ref{eq:11x}) in (\ref{eq:11}), and using the trace property, $\text{tr}\left\lbrace A+B \right\rbrace=\text{tr}\left\lbrace A \right\rbrace + \text{tr}\left\lbrace B \right\rbrace   $ for two square matrices $ A $ and $ B $, we have
	\begin{align}\label{eq:13}
	V(k+1)-V(k)&=\text{tr}\Biggl\{ \tilde{\theta}_v^T(k)\Gamma^{-1}\tilde{\theta}_v(k)\Lambda+\tilde{\theta}_v^T(k)v(k)\epsilon^T(k)B\Lambda+B^T\epsilon(k)v^T(k)\tilde{\theta}_v(k)\Lambda\notag \\
	&+B^T\epsilon(k)v^T(k)\Gamma v(k)\epsilon^T(k)B\Lambda\Bigg\}-\text{tr}\left\lbrace \tilde{\theta}_v^T(k)\Gamma^{-1}\tilde{\theta}_v(k)\Lambda \right\rbrace\notag \\
	&=\text{tr}\Biggl\{ 2B^T\epsilon(k)v^T(k)\tilde{\theta}_v(k)\Lambda+B^T\epsilon(k)v^T(k)\Gamma v(k)\epsilon^T(k)B\Lambda\Bigg\}.
	\end{align}
	Since $ u(k)=\theta_v^{T}v(k) $ and $ v(k)=B\Lambda \theta_v^{*^T}v(k) $, (\ref{eq:thm3x}) can be rewritten as
	\begin{align}\label{eq:thm3xx}
	\epsilon(k)=\frac{-B\Lambda \tilde{\theta}_v^T(k)v(k)}{\sigma^2(k)}.
	\end{align}
	By substituting (\ref{eq:thm3xx}) in (\ref{eq:13}) and using the trace properties, $ \text{tr}\left\lbrace cA \right\rbrace =c\times  \text{tr}\left\lbrace A \right\rbrace   $, for a square matrix $ A $ and a scalar $ c $, and $ a^Tb=\text{tr}\left\lbrace ba^T \right\rbrace  $, for two column vectors $ a $ and $ b $, (\ref{eq:13}) can be rewritten as
	\begin{align}\label{eq:14}
	V(k+1)-V(k)&=\text{tr}\Biggl\{ \frac{-2B^TB\Lambda \tilde{\theta}_v^T(k)v(k)v^T(k)\tilde{\theta}_v(k)\Lambda}{\sigma^2(k)}+\frac{B^TB\Lambda \tilde{\theta}_v^T(k)v(k)v^T(k)\Gamma v(k)v^T(k)\tilde{\theta}_v(k)\Lambda B^TB\Lambda}{\sigma^4(k)} \Biggr\}\notag \\
	&=\frac{1}{\sigma^2}\text{tr}\Biggl\{ -2B^TB\Lambda \tilde{\theta}_v^T(k)v(k)v^T(k)\tilde{\theta}_v(k)\Lambda+\frac{B^TB\Lambda \tilde{\theta}_v^T(k)v(k)v^T(k)\Gamma v(k)v^T(k)\tilde{\theta}_v(k)\Lambda B^TB\Lambda}{\sigma^2(k)} \Biggr\}\notag \\
	&=\frac{1}{\sigma^2(k)}\text{tr}\Biggl\{ -2v^T(k)\tilde{\theta}_v(k)\Lambda B^TB\Lambda \tilde{\theta}_v^T(k)v(k)\notag \\
	&+ \frac{v^T(k)\Gamma v(k)v^T(k)\tilde{\theta}_v(k)\Lambda B^TB\Lambda B^TB\Lambda \tilde{\theta}_v^T(k)v(k)}{\sigma^2(k)} \Biggr\}.
	\end{align}
	Using the inequality $ a^TAa\leq \lambda_{\text{max}}(A)a^Ta $ for a symmetric matrix $ A $ and a column vector $ a $, an upper bound for (\ref{eq:14}) can be obtained as
	\begin{align}\label{eq:14x}
	V(k+1)-V(k) &\leq \frac{1}{\sigma^2(k)}\text{tr}\Biggl\{ -2v^T(k)\tilde{\theta}_v(k)\Lambda B^TB\Lambda \tilde{\theta}_v^T(k)v(k)\notag \\
	&+\lambda_{\text{max}}(B\Lambda B^T) \frac{v^T(k)\Gamma v(k)}{\sigma^2(k)}v^T(k)\tilde{\theta}_v(k)\Lambda B^TB\Lambda \tilde{\theta}_v^T(k)v(k) \Biggr\}\notag \\
	&=\frac{1}{\sigma^2(k)}\text{tr}\Biggl\{ v^T(k)\tilde{\theta}_v(k)\Lambda B^TB\Lambda \tilde{\theta}_v^T(k)v(k)\times \left( -2+\lambda_{\text{max}}(B\Lambda B^T)\frac{v^T(k)\Gamma v(k)}{\sigma^2(k)}\right)  \Biggr\}.
	\end{align}
	Considering the definition of $ \sigma(k) $, which is given after (\ref{eq:thm3x}), it can be obtained that $ -2\leq \left( -2+\lambda_{\text{max}}(B\Lambda B^T)\frac{v^T(k)\Gamma v(k)}{\sigma^2(k)} \right)<-1 $. Therefore, an upper bound for (\ref{eq:14x}) can be written as
	\begin{align}\label{eq:15}
	V(k+1)-V(k)< \frac{-1}{\sigma^2(k)} v^T(k)\tilde{\theta}_v(k)\Lambda B^TB\Lambda \tilde{\theta}_v^T(k)v(k)\leq 0.
	\end{align}
	This shows that $ V(k)\in \mathcal{L}_{\infty} $ and therefore $ \tilde{\theta}_v(k)\in \mathcal{L}_{\infty} $, which implies that $ {\theta}_v(k)\in \mathcal{L}_{\infty} $. In addition, since $ V(k) $ is decreasing and positive definite, it has a limit as $ k\rightarrow \infty $, that is, $ \lim_{k\rightarrow \infty}V(k)=V_{\infty} $. Furthermore, using Theorem \ref{thm2}, the error dynamics (\ref{eq:9}), and the boundedness of $ \tilde{\theta}_v $ and $ v $, it can be shown that $ e(t)\in \mathcal{L}_{\infty} $. Finally, since $ \xi_m $ in (\ref{eq:5}) is bounded, $ \xi $ is also bounded.
	
	Summing both sides of (\ref{eq:15}) from $ k=0 $ to $ \infty $, it is obtained that
	\begin{align}\label{eq:17}
	& \sum_{k=0}^{\infty}\frac{1}{\sigma^2(k)}  \left( v^T(k)\tilde{\theta}_v(k)\Lambda B^TB\Lambda \tilde{\theta}_v^T(k)v(k)\right) \leq V(0)-V_{\infty}\leq \infty,\notag \\
	&\Rightarrow \lambda_{\text{min}}(\Lambda B^TB\Lambda )\sum_{k=0}^{\infty}\frac{1}{\sigma^2(k)}  \left( v^T(k)\tilde{\theta}_v(k)\tilde{\theta}_v^T(k)v(k)\right)\leq V(0)-V_{\infty}\leq \infty,\notag \\
	&\Rightarrow \sum_{k=0}^{\infty}\frac{v^T(k)\tilde{\theta}_v(k)\tilde{\theta}_v^T(k)v(k)}{\sigma^2(k)}\leq \frac{V(0)-V_{\infty}}{\lambda_{\text{min}}(\Lambda B^TB\Lambda )}\leq \infty. 
	\end{align}
	This leads to the conclusion that $ \frac{\tilde{\theta}_v^T(k)v(k)}{\sigma(k)}\in \mathcal{L}_2 $. Therefore \cite{Tao03}, 
	\begin{align}\label{eq:18}
	\lim_{k\rightarrow \infty}\frac{\tilde{\theta}_v^T(k)v(k)}{\sigma(k)}=0.
	\end{align}
	Since $ v(k) $ is bounded,  $ \sigma(k) $ is also bounded. Furthermore, $ \sigma(k)\geq 1 $ by definition. Therefore, using (\ref{eq:18}) it is obtained that 
	\begin{align}\label{eq:18x}
	\lim_{k\rightarrow \infty}\tilde{\theta}_v^T(k)v(k)=0_{m\times 1}.
	\end{align}
	Using (\ref{eq:18x}) and (\ref{eq:thm3xx}), it can be concluded that $ \epsilon(k) $ converges to zero. Considering (\ref{eq:thm3}) and the convergence of $ \epsilon(k) $ to zero, we deduce that $ \theta_v(k) $ converges to a constant value as $ k\rightarrow \infty $. Finally, (\ref{eq:18x}) and the error dynamics (\ref{eq:9}) lead to the conclusion that $ \lim_{k\rightarrow \infty}e(k)=0 $.$\blacksquare$
\end{proof}

\begin{rem}\label{rem1}
	To realize (\ref{eq:4}), the signal $ B\Lambda u(k) $ is required. In motion control applications, this signal corresponds to the net external forces and moments \cite{Har03}, which can be obtained via an inertial measurement unit (IMU). Examples of measuring/estimating this signal, without introducing delay or noise amplification, and employing it in real applications can be found in \cite{Kut07} and \cite{SieMud10}.
\end{rem}

\begin{rem}\label{rem2}
	To calculate $ \sigma $, whose definition is given after (\ref{eq:thm3x}), $ \lambda_{\text{max}}(B \Lambda B^T) $ needs to be computed. Although $ \Lambda $ is an unknown matrix, the range of its elements is known, which is $ (0,1] $. Therefore, the maximum eigenvalue of the matrix multiplication $ B \Lambda B^T $ can be calculated.
\end{rem}

To obtain fast convergence without introducing excessive oscillations, the open loop reference model (\ref{eq:5}) is modified as a closed loop reference model \cite{GibAnnLav15} as follows
\begin{align}
{\xi}_{m1}(k+1)&=A_m\xi_{m1}(k)-\l\left( \xi(k)-\xi_{m1}(k)\right) \label{eq:21x} \\
&=A_m^c\xi_{m1}(k)-\l \xi(k),\label{eq:21xx}
\end{align}
where $ A_m^c\equiv A_m+\l I_r $ and $l$ is a scalar design parameter. Defining the error $e_1(k)=\xi(k)-\xi_{m1}(k)$, and using (\ref{eq:21x}) and (\ref{eq:8}), the error dynamics is obtained as
\begin{align}\label{eq:22x}
{e}_1(k+1)&=A_me_1(k)+B\Lambda \tilde{\theta}_v^T(k)v(k)-\l e_1(k)\notag \\
&=\bar{A}_me_1(k)+B\Lambda \tilde{\theta}_v^T(k)v(k),
\end{align}
where $ \bar{A}_m\equiv A_m-\l I_r $. It is noted that the design parameter $ \l $ needs to be chosen such that both $ A_m^c $ in (\ref{eq:21xx}) and $ \bar{A}_m $ in (\ref{eq:22x}) satisfy Assumption \ref{Assum1}. 
\begin{thm}\label{thm4}
	Consider the error dynamics (\ref{eq:22x}), with the update law (\ref{eq:thm3}). The adaptive parameter $ \theta_v(k) $, error signal $ e_1(k) $ and all other signals remain bounded, and $ \lim_{k\rightarrow \infty}e_1(k)=0 $.
\end{thm}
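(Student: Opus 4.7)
The plan is to follow the proof of Theorem~\ref{thm3} almost verbatim, since only the system matrix in the error dynamics has changed ($\bar{A}_m$ in place of $A_m$), while the update law (\ref{eq:thm3}), the signal $\epsilon(k)$ in (\ref{eq:thm3x}), and the ideal parameter $\theta_v^*$ are all unchanged. Consequently, the same Lyapunov function $V(k)=\text{tr}\{\tilde{\theta}_v^T(k)\Gamma^{-1}\tilde{\theta}_v(k)\Lambda\}$ should do the entire job for the parameter side of the analysis.

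Concretely, I would re-derive $V(k+1)-V(k)$. Because $\tilde{\theta}_v(k+1)=\tilde{\theta}_v(k)+\Gamma v(k)\epsilon^T(k)B$ and the rewriting $\epsilon(k)=-B\Lambda\tilde{\theta}_v^T(k)v(k)/\sigma^2(k)$ both depend only on the update law and on $\theta_v^*$ being a right pseudo-inverse of $B\Lambda$---neither of which references the reference model---the chain of manipulations leading to (\ref{eq:14})--(\ref{eq:15}) reproduces line by line. This yields $V\in\mathcal{L}_\infty$, the existence of a limit $V_\infty$, the boundedness of $\tilde{\theta}_v$ and $\theta_v$, and, after summing (\ref{eq:15}) from $0$ to $\infty$, the fact that $\tilde{\theta}_v^T v/\sigma\in\mathcal{L}_2$. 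Boundedness of $\sigma$ (since $v$ is bounded) then gives $\tilde{\theta}_v^T(k)v(k)\to 0$, exactly as in (\ref{eq:18x}).

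The next step is to establish boundedness of $e_1$, $\xi_{m1}$, and $\xi$. Since $\bar{A}_m$ satisfies Assumption~\ref{Assum1} by the design requirement stated after (\ref{eq:22x}), and the forcing term $B\Lambda\tilde{\theta}_v^T v$ is already bounded from the previous paragraph, Theorem~\ref{thm2} applied to (\ref{eq:22x}) gives $e_1\in\mathcal{L}_\infty$. The real subtlety---and the one place where the closed-loop reference model complicates things---is that $\xi_{m1}$ in (\ref{eq:21xx}) is coupled to $\xi$, which could in principle create a circular boundedness argument. I would break this circularity by working with the equivalent form (\ref{eq:21x}) and substituting $\xi(k)-\xi_{m1}(k)=e_1(k)$ to obtain $\xi_{m1}(k+1)=A_m\xi_{m1}(k)-\ell e_1(k)$. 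Now $\xi_{m1}$ is driven by the already-bounded $e_1$ through a matrix $A_m$ satisfying Assumption~\ref{Assum1}, so Theorem~\ref{thm2} gives $\xi_{m1}\in\mathcal{L}_\infty$, whence $\xi=e_1+\xi_{m1}\in\mathcal{L}_\infty$.

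For the final claim, $\lim_{k\to\infty}e_1(k)=0$ follows by viewing (\ref{eq:22x}) as a Schur-stable linear system driven by a vanishing input: the forcing $B\Lambda\tilde{\theta}_v^T(k)v(k)\to 0$ (from $\tilde{\theta}_v^T v\to 0$ and the fixed matrices $B$, $\Lambda$), and $\bar{A}_m$ has all eigenvalues strictly inside the unit circle by the $\ell$-design condition, so standard discrete-time vanishing-perturbation arguments close the proof. The main obstacle, as flagged above, is the plant/reference-model coupling introduced by the $-\ell\xi$ term; once that is untangled by rewriting $\xi_{m1}$ in $e_1$-driven form, the remainder is essentially a repetition of Theorem~\ref{thm3}.
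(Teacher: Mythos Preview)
Your proposal is correct and matches the paper's approach exactly: the paper's own proof of Theorem~\ref{thm4} simply states that it is similar to that of Theorem~\ref{thm3} and is omitted for brevity. In fact you are more careful than the paper, since you explicitly identify and resolve the $\xi/\xi_{m1}$ coupling by rewriting (\ref{eq:21x}) as $\xi_{m1}(k+1)=A_m\xi_{m1}(k)-\ell e_1(k)$, a point the paper leaves implicit.
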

\begin{proof}
\normalfont	The proof is similar to that of Theorem \ref{thm3}'s and therefore omitted here for brevity.$\blacksquare$
\end{proof}
\section{CONTROLLER DESIGN}\label{sec:5}
In this section, a discrete Linear Quadratic Regulator (LQR) is designed to generate the total control input for reference tracking. 

Consider the discrete dynamical system given in (\ref{eq:2}), together with the output vector $ y\in \mathbb{R}^{r} $ as
\begin{align}\label{eq:23}
x(k+1)&=Ax(k)+B_vv(k),\notag \\
y(k)&=Cx(k),
\end{align} 
where $ C\in \mathbb{R}^{r\times n} $ is the output matrix.
In order to design a controller for reference tracking, we define a new state as
\begin{align}\label{eq:24}
x_{new}(k+1)=x_{new}(k)+\Delta t(ref(k)-y(k)),
\end{align}
where $ x_{new}\in \mathbb{R}^r $ is created by integrating the tracking error, $ ref\in \mathbb{R}^r $ is the reference input and $ \Delta t\in \mathbb{R}^+ $ is the sampling interval. Augmenting (\ref{eq:24}) with (\ref{eq:23}), we have
\begin{align}
\begin{bmatrix}
x(k+1)\\ x_{new}(k+1)
\end{bmatrix}&=\begin{bmatrix}
A & 0_{n\times r} \\ -\Delta tC & I_r
\end{bmatrix}\begin{bmatrix}
x(k)\\ x_{new}(k)
\end{bmatrix}+\begin{bmatrix}
B_v\\ 0_{r\times r}
\end{bmatrix}v(k)
+\begin{bmatrix}
0_{n\times r}\\ \Delta tI_r
\end{bmatrix}ref(k).
\end{align}

Defining $ z(k)\equiv \left[ x^T(k)\ x_{new}^T(k) \right]^T\in \mathbb{R}^{n+r} $ as the state vector of the aggregate system, the cost function required to solve the LQR problem becomes
\begin{align}\label{eq:26}
J=\sum_{k=0}^{\infty}\left( z^T(k)Qz(k)+v^T(k)Rv(k) \right),
\end{align}
where $ Q\in \mathbb{R}^{(n+r)\times (n+r)} $ is a positive semi-definite matrix and $ R\in \mathbb{R}^{r\times r} $ is a positive definite matrix. 

Using (\ref{eq:26}) and solving the standard LQR problem for discrete systems lead to the optimal gain matrix $ K\in \mathbb{R}^{r\times (n+r)} $ and the control signal is obtained as $ v(k)=-Kz(k) $. The structure of the controller is shown in Figure \ref{fig:block}.
\begin{figure}
	\vspace{0.2cm}
	\centerline{\includegraphics[scale=0.5]{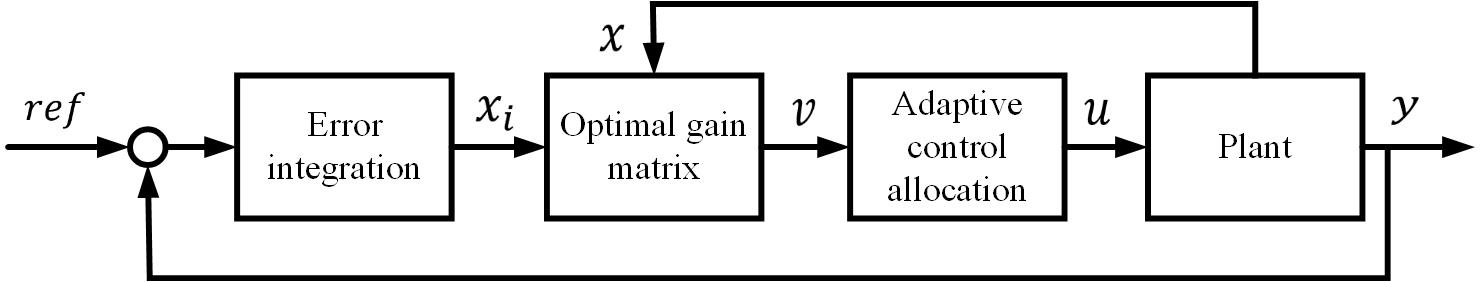}}
	\caption{Block diagram of the closed loop system. \label{fig:block}}
\end{figure}

\section{SIMULATION RESULTS}\label{sec:6}
\subsection{ADMIRE model}
The ADMIRE model is an over-actuated aircraft model introduced in \cite{Bat07} and \cite{Har03}. We use a version of this model that is linearized at Mach 0.22 and altitude 3km.
Considering the actuator loss of effectiveness matrix $ \Lambda $, and discretizing the continuous model using a $ 0.1 $s sampling interval, the discrete time dynamics is obtained as
\begin{align}\label{eq:e59}
x(k+1)&=Ax(k)+B_u u(k)\notag\\
&=Ax(k)+B_vBu(k) \notag \\
&=Ax(k)+B_vv(k),
\end{align}
where $ v=Bu $ is the control input, $ x=[\alpha \ \beta \ p \ q \ r]^T $ is the state matrix with $\alpha, \beta, p, q$ and $r$ denoting the angle of attack, sideslip angle, roll rate, pitch rate and yaw rate, respectively. The vector $ u=[u_c \ u_{re} \ u_{le} \ u_r]^T $ represents the control surface deflections of canard wings, right and left elevons and the rudder. The state and control matrices are given as
\begin{align}\label{eq:e62x}
A= \begin{bmatrix}1.0214 & 0.0054 & 0.0003 & 0.4176 & -0.0013\\
0  &  0.6307  &  0.0821   &   0  & -0.3792\\
0  & -3.4485  &  0.3979   &   0  &  1.1569\\
1.1199 & 0.0024 & 0.0001 &  1.0374 &  -0.0003\\
0  &  0.3802 &  -0.0156 &  0  &  0.8062
\end{bmatrix},
\end{align}
\begin{align}\label{eq:e62}
B_u= \begin{bmatrix}0.1823 & -0.1798 & -0.1795  &  0.0008\\
0 &  -0.0639 &   0.0639  &  0.1396 \\
0 & -1.584 & 1.584 & 0.2937 \\
0.8075 & -0.6456 &  -0.6456 &  0.0013\\
0 &  -0.1005  &  0.1005 &  -0.4113
\end{bmatrix}.
\end{align}
The uncertainty which is considered as actuator loss of effectiveness occurs at $ t=100 $s and reduces the actuator effectivenesses by $ 30\% $. It is noted that for the design of the control allocator, the first two rows of the matrix $ B_u $ is taken to be zero, which makes the control surfaces pure moment generators \cite{Har05}. However, the original $ B_u $ matrix given in (\ref{eq:e62}) is used for the plant dynamics in the simulations.

\subsection{Design parameters}
The design parameters of the controller are the matrices $ R $ and $ Q $, which are selected as $ R=\text{diag}([1,\ 1,\ 0.1]) $ and $ Q=I_8 $. Control allocation design parameters are $ \Gamma $ and $ A_m $. These two matrices are chosen as $ \Gamma=\text{diag}([1,\ 1,\ 0.1]) $ and $ A_m=\text{diag}([0.5,\ 0.5,\ 0.5]) $. To improve the transient response, the design parameter for the closed loop reference model approach is chosen as $ \l=0.1 $. 

\begin{figure}
	\centerline{\includegraphics[scale=0.5]{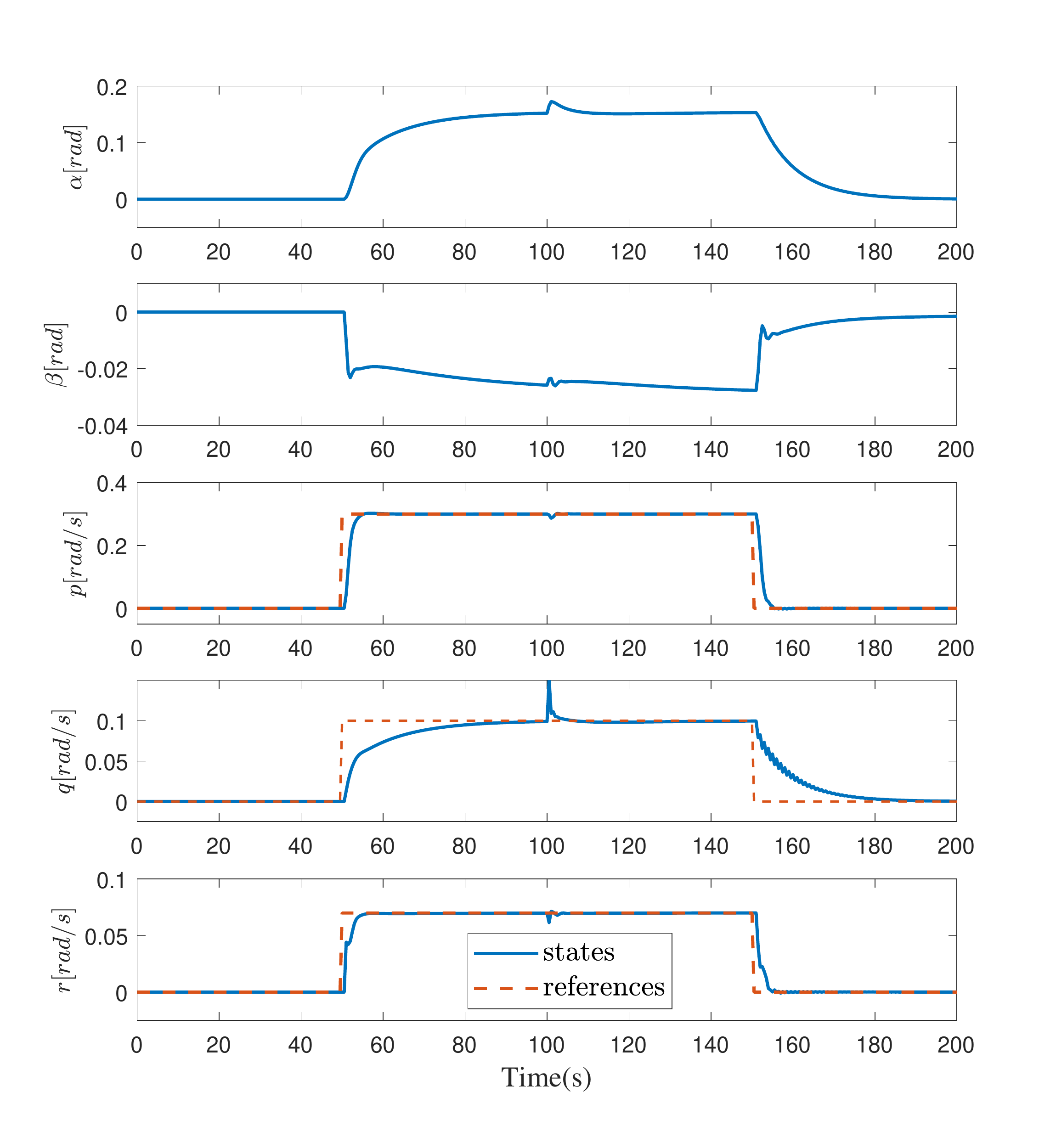}}
	\caption{System states and reference tracking.
		\label{fig:states}}
\end{figure}
\begin{figure}
	\centerline{\includegraphics[scale=0.5]{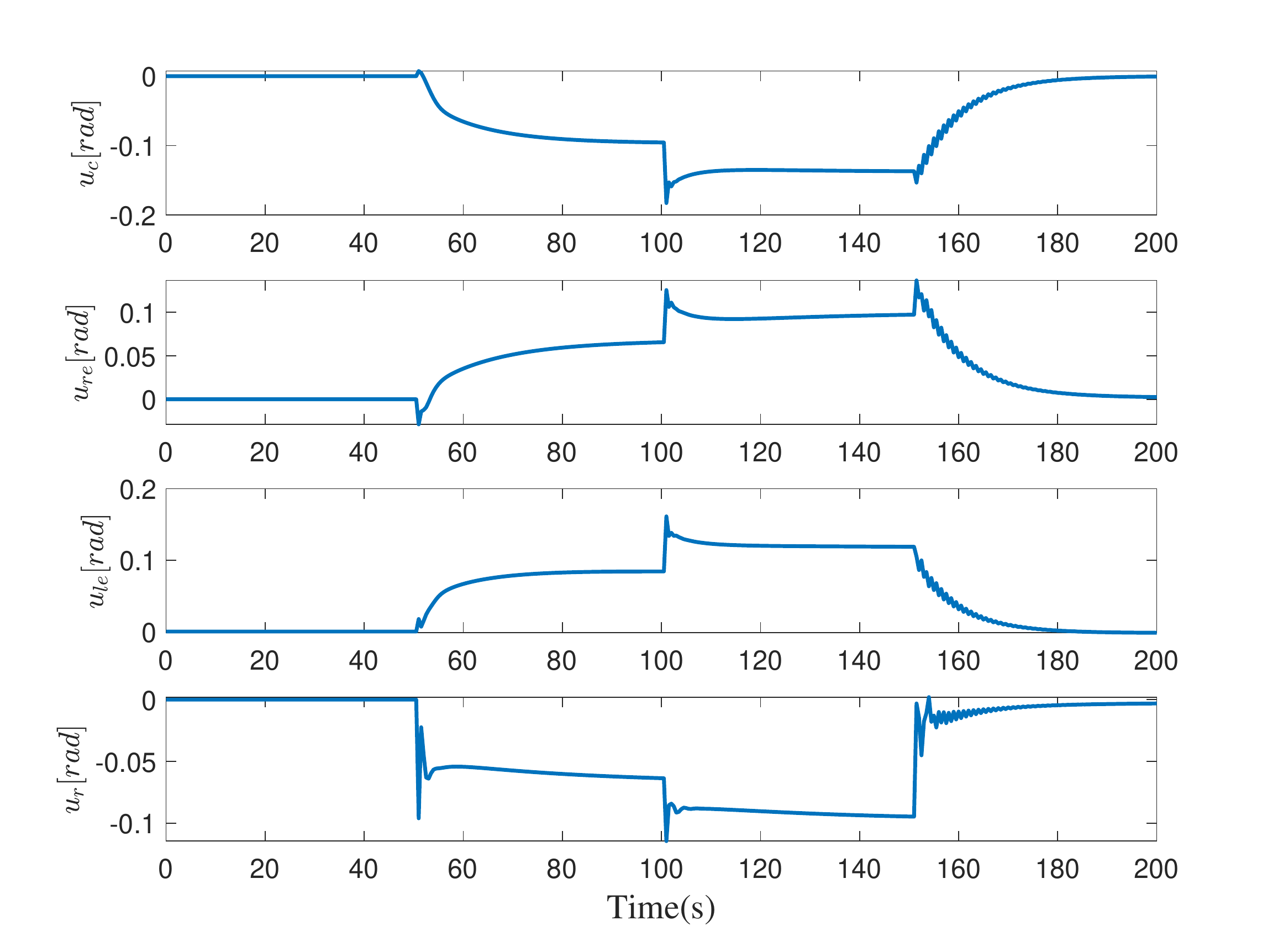}}
	\caption{Control surface deflections. \label{fig:u}}
\end{figure}
\begin{figure}
	\centerline{\includegraphics[scale=0.5]{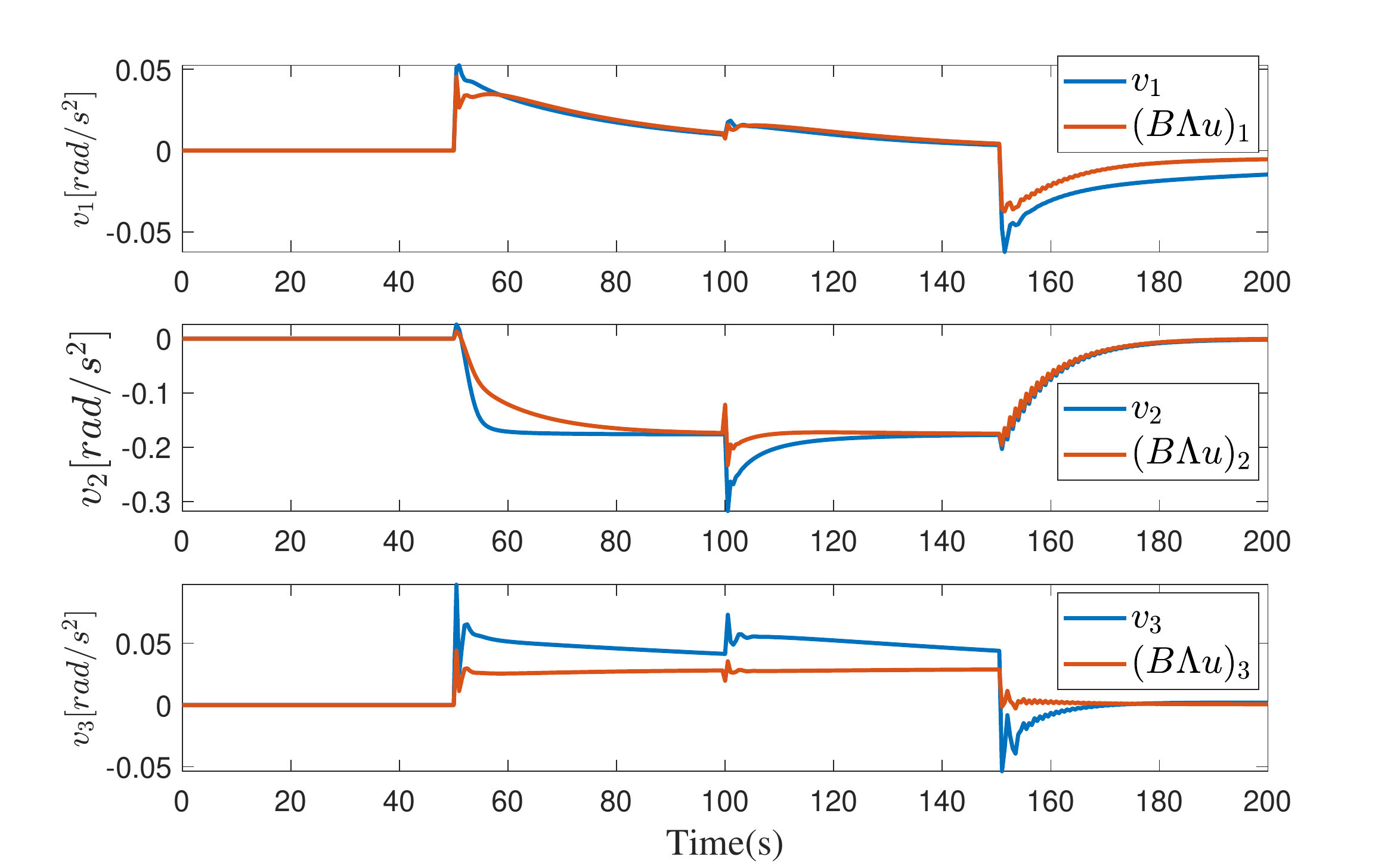}}
	\caption{Control input signal, $ v $, tracking. \label{fig:v}}
\end{figure}
\begin{figure}
	\centerline{\includegraphics[scale=0.6]{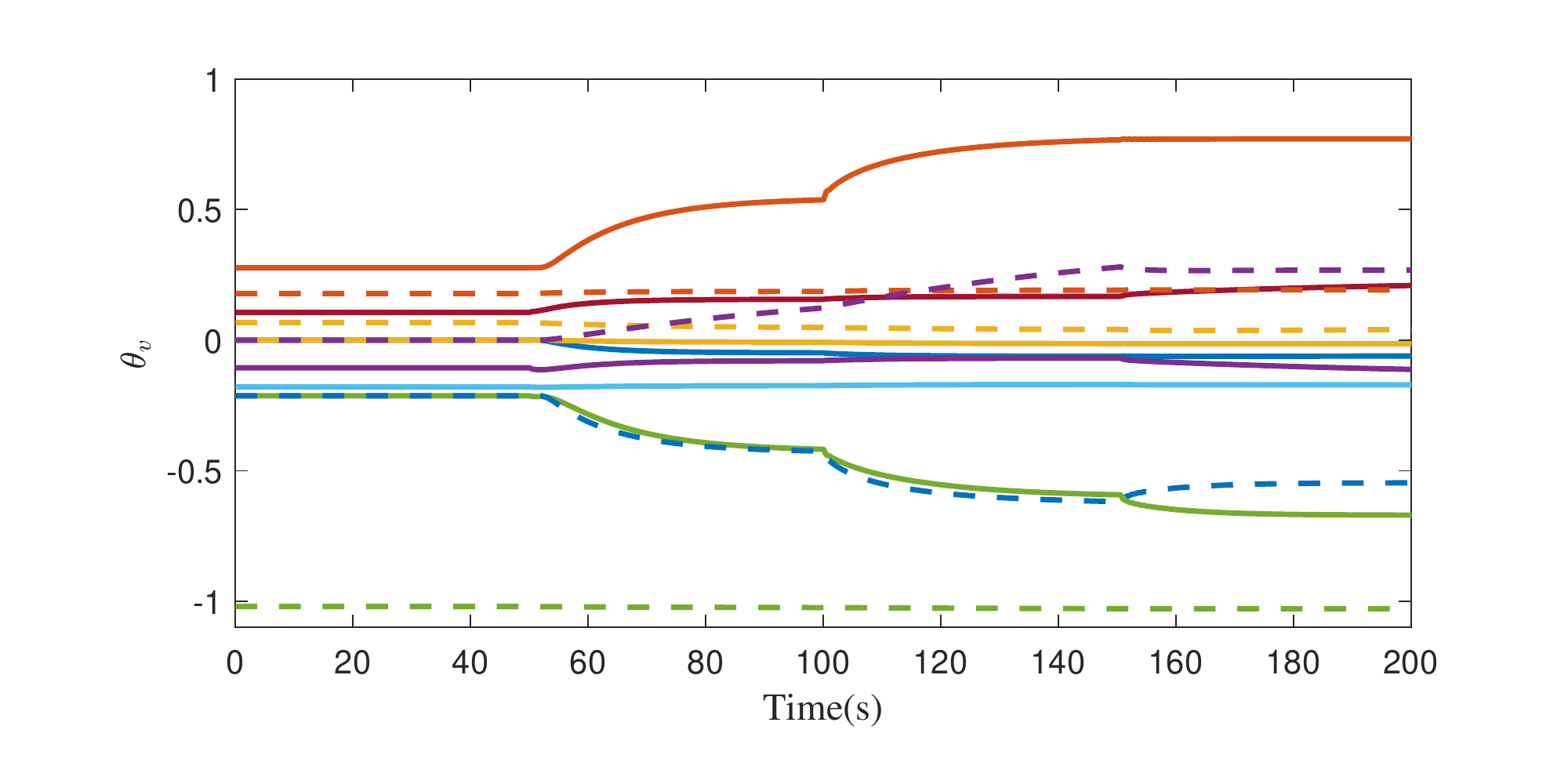}}
	\caption{Adaptive parameters. \label{fig:teta}}
\end{figure}

\subsection{Simulation results}
Figure \ref{fig:states} illustrates the aircraft states together with the three references. It is seen that the first two states ($ \alpha $ and $ \beta $) remain bounded while the other three states ($ p $, $ q $ and $ r $) track their references. The effect of the actuator effectiveness uncertainty, which is introduced at $ t=100 $s, can also be observed in this figure.   Figure \ref{fig:u} shows the time evolution of control surfaces, where no excessive deflections are observed. It is seen in Figure \ref{fig:v} that the total control signals, $ v_i, i=1,2,3 $,  are realized by the control allocator. The figure shows that $ B\Lambda u $ is converging to $ v $, which implies that the control allocation error is converging to zero. Adaptive parameters' time evolutions are demonstrated in Figure \ref{fig:teta}. The elements of $ \theta_v $ remain bounded throughout the simulation and eventually converge to constant values.

\section{Summary}\label{sec:7}
A discrete adaptive control allocation is proposed in this paper. This method is able to distribute the total control signals of a sampled-data system among redundant actuators in the presence of actuator effectiveness uncertainty.  The proposed control allocation method does not require uncertainty estimation or persistency of excitation. Simulation results demonstrate the effectiveness of the method.

\bibliographystyle{unsrt}  

%
%
\bibliography{References}

\end{document}